\newcommand{\vF}{\mathbb{F}}
\newcommand{\cC}{\mathcal{C}}
\newcommand{\hhs}{\widehat{H}_S}
\newcommand{\ccs}{\widehat{\cC}_S}
\newtheorem{theorem}{Theorem}[section]
\newtheorem{lemma}[theorem]{Lemma}
\newtheorem{proposition}{Proposition}
\theoremstyle{definition}
\newtheorem{definition}[theorem]{Definition}
\newtheorem{remark}{Remark}
\def\cprime{$'$}
\numberwithin{equation}{section}
\begin{document}

\title[Encryption Scheme Based on Expanded Reed-Solomon Codes]{Encryption Scheme Based on Expanded Reed-Solomon Codes}

\author[K. Khathuria]{Karan Khathuria}
\address{Institute of Mathematics\\
University of Zurich\\
Winterthurerstrasse 190\\
8057 Zurich, Switzerland\\
}
\email{karan.khathuria@math.uzh.ch}

\author[J. Rosenthal]{Joachim Rosenthal}
\address{Institute of Mathematics\\
University of Zurich\\
Winterthurerstrasse 190\\
8057 Zurich, Switzerland\\
}
\email{rosenthal@math.uzh.ch}

\author[V. Weger]{Violetta Weger}
\address{Institute of Mathematics\\
University of Zurich\\
Winterthurerstrasse 190\\
8057 Zurich, Switzerland\\
}
\email{violetta.weger@math.uzh.ch}

%
%
%
%
%
%
%
%

\keywords{Code-based Cryptography, McEliece Cryptosystem, Reed-Solomon codes, Expanded codes}

\begin{abstract}
We present a code-based public-key cryptosystem, in which we use Reed-Solomon codes over an extension field as secret codes and disguise it by considering its shortened expanded code over the base field. Considering shortened expanded codes provides a safeguard against distinguisher attacks based on the Schur product. Moreover, without using a cyclic or a quasi-cyclic structure we obtain a key size reduction of nearly $45 \%$ compared to the classic {McE}liece cryptosystem proposed by Bernstein \textit{et al.}

\end{abstract}

\maketitle

\section{Introduction}
\label{sec:introduction}
In 1978 McEliece \cite{mc78} presented the first code-based public key cryptosystem. It belongs to the family of very few public-key cryptosystems which are unbroken since decades. The hard problem the McEliece system relies on, is the difficulty of decoding a random (-like) linear code having no visible structure. McEliece proposed to use binary Goppa codes for the encryption scheme. Due to the low error-correcting capacity of Goppa codes, the cryptosystem results in large public key sizes. Several alternative families of codes have been proposed with the aim of reducing the key sizes. Some of the famous families of codes considered are: generalized Reed-Solomon codes \cite{ba11,ba15z,ba19,be05a,bo16,kh18,ni86a}, non-binary Goppa codes \cite{be10}, algebraic geometric codes \cite{ja96}, LDPC and MDPC codes \cite{ba08p,mi13}, Reed-Muller codes \cite{si94b} and convolutional codes \cite{lo12}. Most of them were unsuccessful in hiding the structure of the private code \cite{co14,co17,co17w,co15a,la13,mi07,ot10,si92,wi10}.

The motivation to quest for better code-based cryptosystems is mainly due to the advent of quantum computers. In 1994 Peter Shor \cite{sh94} developed a polynomial time quantum algorithm for factoring integers and solving discrete logarithm problems. This means that most of the currently popular cryptosystems, such as RSA and ECC, will be broken in an era of quantum computers. In the ongoing process of the standardization of quantum-resistant public-key cryptographic algorithms by the National Institute of Standards and Technology (NIST), code-based cryptosystems are one of the most promising candidates.  At the time of this writing there are seven code-based cryptosystems included in NIST's standardization process: BIKE \cite{ar17} based on quasi-cyclic MDPC codes, classic McEliece \cite{be08} based on binary Goppa codes, ROLLO \cite{ca18} based on quasi-cyclic LRPC codes, RQC \cite{ag18} based on rank metric quasi-cyclic codes, HQC \cite{ag18} based on Hamming metric quasi-cyclic codes, LEDAcrypt \cite{ba18b} based on quasi-cyclic LDPC codes and NTS-KEM \cite{al18} based on binary Goppa codes. 

In this paper we present a new variant of the McEliece scheme using expanded Reed-Solomon codes. A linear $[n,k]$ code defined over an extension field $\vF_{q^m}$ can be expanded, over the base field $\vF_q$, to a $[mn,mk]$ linear code by expanding each codeword with respect to a fixed $\vF_q$-linear isomorphism from $\vF_{q^m}$ to $\vF_q^m$. In the proposed cryptosystem we hide the structure of an expanded GRS code by puncturing and permuting the columns of its parity check matrix and multiplying by an invertible block diagonal matrix. In order to decode a large number of non-codewords, we use a burst of errors during the encryption step, i.e. we consider error vectors having support in sub-vectors of size $\lambda$. This error pattern comes with a disadvantage: it can be used to speed up the information set decoding (ISD) algorithms. However, for a small degree of extension $m$, the key sizes turn out to be remarkably competitive.

The paper is organized as follows. In Section \ref{sec:background}, we give the preliminaries regarding the expanded codes. In Section \ref{sec:cryptosystem}, we describe the proposed cryptosystem which is based on the shortening of an expanded generalized Reed-Solomon code. In Section \ref{sec:Security}, we provide security arguments for the proposed cryptosystem against the known structural and non-structural attacks. In Section \ref{sec:KeySize}, we provide parameters of the proposed cryptosystem that achieve a security level of 256-bits against the ISD algorithm.

\section{Background} \label{sec:background}

\subsection{Expanded Codes}

Let $q$ be a prime power and let $m$ be an integer. 
Let $\gamma$ be a primitive element of the field $\vF_{q^m}$, i.e. $\vF_{q^m} \cong \vF_q(\gamma) $. The field $\vF_{q^m}$ can also be seen as an $\vF_q$- vector space of dimension $m$ via the following $\vF_q$-linear isomorphism 
\begin{align*}
\phi: \vF_{q^m} & \longrightarrow   \vF_{q}^{m} ,\\ 
a_0 + a_1 \gamma + \cdots + a_{m-1} \gamma^{m-1}   & \longmapsto     (a_0,a_1, \ldots, a_{m-1}).
\end{align*} 
We extend this isomorphism for vectors over $\vF_{q^m}$ in the following way:
\begin{align*}
\phi_n: \vF_{q^m}^n  & \longrightarrow    \vF_{q}^{mn} , \\ 
(\alpha_0,\alpha_1,\ldots, \alpha_{n-1})  & \longmapsto   \left (\phi(\alpha_0), \phi(\alpha_1),\ldots, \phi(\alpha_{n-1}) \right ).
\end{align*}
This is clearly an $\vF_q$-linear isomorphism. Hence this gives us a way to obtain a linear code over $\vF_q$ from a linear code over $\vF_{q^m}$.
\begin{definition}[Expanded Codes]
Let $n,k$ be positive integers with $k \leq n$, let $q$ be a prime power and $m$ be an integer. Let $\cC$ be a linear code of length $n$ and dimension $k$ over $\vF_{q^m}$. The expanded code of $\cC$ with respect to a primitive element $\gamma \in \vF_{q^m}$ is a linear code over the base field $\vF_q$ defined as  \[\widehat{\cC} := \lbrace \phi_n(c) : c \in \cC \rbrace, \] where $\phi_n$ is the $\vF_q$-linear isomorphism defined by $\gamma$ as above.
\end{definition}
\begin{remark}
 It is easy to see that the expanded code $\widehat{\cC}$ is a linear code of length $mn$ and dimension $mk$, because $\phi_n$ is an $\vF_q$-linear isomorphism and\\ $|\widehat{\cC}| = |\cC| = (q^m)^k = q^{mk}$.
\end{remark}

Given a code $\cC$ with its generator matrix and parity check matrix, the following lemma gives a way to construct a generator matrix and a parity check matrix of the expanded code $\widehat{\cC}$.
\begin{lemma} Let $\cC$ be a linear code in $\vF_{q^m}^n$.
\begin{enumerate} 
\item Let $\cC$ have a generator matrix $G = [g_1, g_2,\ldots,g_k]^\intercal$, where $g_1,g_2,\ldots,g_k$ are vectors in $\vF_{q^m}^n$. Then the expanded code of $\cC$ over $\vF_q$ with respect to a primitive element $\gamma \in \vF_{q^m}$ has the expanded generator matrix \begin{align*}
\widehat{G} :=  [\phi_n(g_1),\phi_n(\gamma g_1),\ldots, \phi_n(\gamma^{m-1}g_1),& \phi_n(g_2), \phi_n(\gamma g_2),\ldots, \phi_n(\gamma^{m-1}g_2),\ldots,\\ & \phi_n(g_k),\phi_n(\gamma g_k)\ldots, \phi_n(\gamma^{m-1}g_k)]^\intercal .
\end{align*} 
\item  Let $\cC$ have a parity check matrix $H = [h_1^\intercal,h_2^\intercal,\ldots,h_n^\intercal]$, where $h_1,h_2,\ldots,h_n$ are vectors in $\vF_{q^m}^{n-k}$. Then the expanded code of $\cC$ over $\vF_q$ with respect to a primitive element $\gamma\in \vF_{q^m}$ has the expanded parity check matrix \begin{align*}
\widehat{H} :=  [ & \phi_{n-k}(h_1)^\intercal,\phi_{n-k}(\gamma h_1)^\intercal,\ldots, \phi_{n-k}(\gamma^{m-1}h_1)^\intercal, \phi_{n-k}(h_2)^\intercal,\phi_{n-k}(\gamma h_2)^\intercal,\\ & \ldots, \phi_{n-k}(\gamma^{m-1}h_2)^\intercal, \ldots, \phi_{n-k}(h_n)^\intercal,\phi_{n-k}(\gamma h_n)^\intercal\ldots, \phi_{n-k}(\gamma^{m-1}h_n)^\intercal]. 
\end{align*} 
\end{enumerate} \label{lemma:expand}
\end{lemma}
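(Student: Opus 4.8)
The plan is to prove both parts directly from the fact that $\phi$, and hence $\phi_n$, is $\vF_q$-linear but \emph{not} $\vF_{q^m}$-linear; this non-linearity over the big field is exactly the source of the repeated blocks $\phi_n(\gamma^j g_i)$ and $\phi_{n-k}(\gamma^j h_j)$ in the two matrices.

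For part (1), I would start from the fact that every codeword of $\cC$ has the form $c = \sum_{i=1}^k \mu_i g_i$ with $\mu_i \in \vF_{q^m}$, and then expand each scalar in the $\vF_q$-basis $1,\gamma,\ldots,\gamma^{m-1}$ of $\vF_{q^m}$, writing $\mu_i = \sum_{j=0}^{m-1} a_{ij}\gamma^j$ with $a_{ij}\in\vF_q$. This yields $c = \sum_{i,j} a_{ij}\,\gamma^j g_i$, and applying $\vF_q$-linearity of $\phi_n$ gives $\phi_n(c) = \sum_{i,j} a_{ij}\,\phi_n(\gamma^j g_i)$. As $c$ ranges over $\cC$ the coefficients $a_{ij}$ range over all of $\vF_q$, so $\widehat{\cC}$ is precisely the $\vF_q$-span of the $mk$ vectors $\phi_n(\gamma^j g_i)$, that is, the row space of $\widehat{G}$. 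Since $\widehat{\cC}$ has dimension $mk$ (as noted in the Remark above) and $\widehat{G}$ has exactly $mk$ rows, a dimension count forces these rows to be linearly independent, so $\widehat{G}$ is a generator matrix.

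For part (2), the approach is to show that the kernel of $\widehat{H}$ is exactly $\widehat{\cC}$. The crucial identity is that for $c = (c_1,\ldots,c_n)\in\vF_{q^m}^n$, writing $c_j = \sum_{l=0}^{m-1} a_{jl}\gamma^l$, one has
\begin{align*}
\widehat{H}\,\phi_n(c)^\intercal
= \sum_{j=1}^n \sum_{l=0}^{m-1} a_{jl}\,\phi_{n-k}(\gamma^l h_j)^\intercal
= \phi_{n-k}\!\Big(\sum_{j=1}^n c_j h_j\Big)^{\intercal},
\end{align*}
where the first equality uses that the columns of $\widehat{H}$ are ordered to match the coordinates of $\phi_n(c)$, and the second uses $\vF_q$-linearity of $\phi_{n-k}$ together with $c_j h_j = \sum_l a_{jl}\gamma^l h_j$. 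Since $\phi_{n-k}$ is an isomorphism, $\widehat{H}\,\phi_n(c)^\intercal = 0$ holds if and only if $\sum_j c_j h_j = 0$, i.e.\ if and only if $Hc^\intercal = 0$, i.e.\ if and only if $c\in\cC$. Because $\phi_n$ is a bijection onto $\vF_q^{mn}$, every $x\in\vF_q^{mn}$ equals $\phi_n(c)$ for a unique $c$, so the kernel of $\widehat{H}$ is exactly $\widehat{\cC}$; the count $mn - mk = m(n-k)$ then shows $\widehat{H}$ has full row rank and is therefore a parity check matrix.

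The only delicate point, and the step I would be most careful about, is the bookkeeping: ensuring that the ordering of the columns of $\widehat{H}$ (and the rows of $\widehat{G}$) matches the coordinate ordering induced by $\phi_n$, and tracking precisely where $\vF_q$-linearity may be invoked and where it may not, since scalars from $\vF_{q^m}$ cannot be pulled through $\phi$. Once the displayed identity is set up with matching indices, both claims reduce to the isomorphism property of $\phi_{n-k}$ and a rank count.
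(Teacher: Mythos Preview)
Your argument is correct. Both parts are handled cleanly: in (1) you identify $\widehat{\cC}$ with the $\vF_q$-span of the $\phi_n(\gamma^j g_i)$ and use the dimension count to get independence; in (2) you establish the identity $\widehat{H}\,\phi_n(c)^\intercal = \phi_{n-k}\big(\sum_j c_j h_j\big)^\intercal$ and read off the kernel. The index bookkeeping you flag as delicate is indeed the only thing to watch, and you have it right: the column ordering of $\widehat{H}$ matches the coordinate ordering of $\phi_n$ block by block.

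The paper itself does not prove this lemma at all; it simply cites \cite[Theorem~1]{yi11}. So your self-contained argument goes beyond what the paper provides. It is also worth noting that the identity you derive in part~(2) is exactly the content of the paper's subsequent Proposition~\ref{prop:phipsi}(2), and your part~(1) computation is essentially the same as the paper's proof of Proposition~\ref{prop:phipsi}(1). In effect, you have folded the proposition into the lemma, whereas the paper outsources the lemma and then records the commutation identities separately; either organization is fine.
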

\begin{proof}
See \cite[Theorem 1]{yi11}.
\end{proof}

\begin{proposition}
Let $\cC$ be a linear code in $\vF_{q^m}^n$ having a generator matrix $G = [g_1, g_2,\ldots,g_k]^\intercal$ and a parity check matrix $H = [h_1^\intercal,h_2^\intercal,\ldots,h_n^\intercal]$. Let $\widehat{G}$ and $\widehat{H}$ be the expanded generator matrix and expanded parity check matrix of $\widehat{\cC}$, respectively. Then
\begin{enumerate}
\item $\phi_n(xG) = \phi_k(x)\widehat{G}$ for all $x \in \vF_{q^m}^k$,
\item $\phi_{n-k}(Hy^\intercal) = \widehat{H} (\phi_n(y))^\intercal$ for all $y \in \vF_{q^m}^n$.
\end{enumerate}  \label{prop:phipsi}
\end{proposition}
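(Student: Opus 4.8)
The plan is to prove both identities by the same mechanism: reduce each side to one common $\vF_q$-linear combination of the rows of $\widehat{G}$ (respectively the columns of $\widehat{H}$), exploiting that $\phi_n$ (and likewise $\phi_k$, $\phi_{n-k}$) is $\vF_q$-linear but \emph{not} $\vF_{q^m}$-linear. The essential observation is that although one cannot pull an $\vF_{q^m}$-scalar through $\phi_n$, one can first decompose that scalar over the $\vF_q$-basis $\{1,\gamma,\ldots,\gamma^{m-1}\}$ and then invoke $\vF_q$-linearity. The only point demanding care is the bookkeeping of the index ordering, i.e. matching the coordinates produced by $\phi$ with the row/column blocks of $\widehat{G}$ and $\widehat{H}$ coming from Lemma~\ref{lemma:expand}.

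For part (1), I would write $x=(x_1,\ldots,x_k)$ and expand each coordinate as $x_i=\sum_{j=0}^{m-1}x_{ij}\gamma^j$ with $x_{ij}\in\vF_q$, so that by definition $\phi(x_i)=(x_{i0},\ldots,x_{i,m-1})$ and hence $\phi_k(x)$ lists the scalars $x_{ij}$ in the order $i=1,\ldots,k$ (outer) and $j=0,\ldots,m-1$ (inner). Then $xG=\sum_{i=1}^k x_i g_i=\sum_{i=1}^k\sum_{j=0}^{m-1}x_{ij}(\gamma^j g_i)$, and applying the $\vF_q$-linear map $\phi_n$ gives $\phi_n(xG)=\sum_{i,j}x_{ij}\,\phi_n(\gamma^j g_i)$. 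Since the rows of $\widehat{G}$ are exactly the vectors $\phi_n(\gamma^j g_i)$ in the same $(i,j)$ order, the product $\phi_k(x)\widehat{G}$ is the identical combination $\sum_{i,j}x_{ij}\,\phi_n(\gamma^j g_i)$, which establishes the equality.

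Part (2) proceeds symmetrically on the columns. Writing $y=(y_1,\ldots,y_n)$ with $y_i=\sum_{j=0}^{m-1}y_{ij}\gamma^j$, I would expand $Hy^\intercal=\sum_{i=1}^n y_i h_i^\intercal=\sum_{i,j}y_{ij}(\gamma^j h_i)^\intercal$, where $h_i^\intercal$ is the $i$-th column of $H$. Applying the $\vF_q$-linear map $\phi_{n-k}$ columnwise yields $\phi_{n-k}(Hy^\intercal)=\sum_{i,j}y_{ij}\,\phi_{n-k}(\gamma^j h_i)^\intercal$. On the other side, $(\phi_n(y))^\intercal$ is the column whose entries are the $y_{ij}$ in the same $(i,j)$ order, and by Lemma~\ref{lemma:expand} the columns of $\widehat{H}$ are precisely the vectors $\phi_{n-k}(\gamma^j h_i)^\intercal$ in that order, so $\widehat{H}(\phi_n(y))^\intercal=\sum_{i,j}y_{ij}\,\phi_{n-k}(\gamma^j h_i)^\intercal$ as well.

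I expect no genuine obstacle in the algebra; the single subtle point, which I would flag explicitly, is that the decomposition $x_i=\sum_j x_{ij}\gamma^j$ (resp. $y_i=\sum_j y_{ij}\gamma^j$) is forced precisely because $\phi_n$ fails to be $\vF_{q^m}$-linear, and that one must check that the ordering convention for the rows of $\widehat{G}$ (resp. the columns of $\widehat{H}$) in Lemma~\ref{lemma:expand} agrees with the ordering of the coordinates of $\phi_k(x)$ (resp. $\phi_n(y)$). Once these two orderings are aligned, both identities follow immediately.
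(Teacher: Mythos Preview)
Your proposal is correct and follows essentially the same argument as the paper: decompose each coordinate over the basis $\{1,\gamma,\ldots,\gamma^{m-1}\}$, use the $\vF_q$-linearity of $\phi_n$ (resp.\ $\phi_{n-k}$), and match the resulting sum with the rows of $\widehat{G}$ (resp.\ columns of $\widehat{H}$) as listed in Lemma~\ref{lemma:expand}. The paper runs the chain of equalities starting from $\phi_k(x)\widehat{G}$ and ending at $\phi_n(xG)$, whereas you go in the reverse direction, but the content is identical; for part~(2) the paper simply writes ``Similarly'', which your proposal spells out.
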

\begin{proof}
Let $x = (x_1,x_2,\ldots,x_k) \in \vF_{q^m}^k$ and let $x_i = \sum_{j=0}^{m-1}x_{ij}\gamma^j$ for all\\ $i \in \{1,2,\ldots,k\}$. Then 
\begin{align*}
 \phi_k(x) \widehat{G}  &=  \sum_{i=1}^{k} \sum_{j=0}^{m-1} x_{ij} \phi_n(\gamma^jg_i) \\
  & =  \sum_{i=1}^{k} \phi_n \left( \sum_{j=0}^{m-1} x_{ij} \gamma^j g_i \right) \\
  & =  \sum_{i=1}^{k} \phi_n(x_i g_i) \\
  & = \phi_n \left( \sum_{i=1}^{k} x_i g_i \right) \\
  & =  \phi_n(xG).
\end{align*} Similarly, $\phi_{n-k}(Hy^\intercal) = \widehat{H} (\phi_n(y))^\intercal$ for all $y \in \vF_{q^m}^n$.
\end{proof}
\begin{remark}
 $\widehat{\cC}$ can also be determined by the commutativity of the following diagram (as $\vF_q$-linear maps):
\[ \begin{tikzcd}
0 \arrow{r}{} & \vF_{q^m}^k \arrow{r}{G} \arrow[swap]{d}{\phi_k} & \vF_{q^m}^n \arrow{d}{\phi_n} \arrow{r}{H^\intercal} & \arrow{d}{\phi_{n-k}} \vF_{q^m}^{n-k} \arrow{r}{} & 0 \\%
0 \arrow{r}{} & \vF_q^{mk} \arrow{r}{\widehat{G}}& \vF_q^{mn} \arrow{r}{\widehat{H}^\intercal} & \vF_q^{m(n-k)} \arrow{r} & 0
\end{tikzcd}
\]
\end{remark}

\section{The Cryptosystem}\label{sec:cryptosystem}

In this section we will present the proposed cryptosystem in the Niederreiter version.  
We consider an expanded GRS code whose parity check matrix can be viewed as $n$ blocks, where each block is of size $m$. In order to destroy the algebraic structure of the code, we choose $2 \leq \lambda \leq m-1$ and shorten it on randomly chosen $m-\lambda$ columns in each block. We then hide the shortened code by multiplying it with an invertible matrix, which preserves the weight of a vector over the extension field $\vF_{q^m}$.

\paragraph{\textbf{Key generation:}}
Let $q$ be a prime power, $2 \leq \lambda < m$ be positive integers and $k< n \leq q^m$ be positive integers, satisfying $R:= k/n > (1-\lambda/m)$. 
Consider a GRS code $\cC=\text{GRS}_{n,k}(\alpha,\beta)$  of dimension $k$ and length $n$ over the finite field $\vF_{q^m}$ and choose a parity check matrix $H$ of $\cC$. Let $t$ be the error correction capacity of $\cC$. 

Let $\widehat{H}$ be the expanded parity check matrix of the expanded code $\widehat{\cC}$ of $\cC$ with respect to a primitive element $\gamma \in \vF_{q^m}$. $\widehat{H}$ is an $m(n-k) \times mn$ matrix over $\vF_{q}$.
\paragraph{Shortening $\widehat{\cC}$}
\begin{itemize} 
\item For each $1\leq i \leq n$, let $S_i$ be a randomly chosen subset of\\ $\{ (i-1)m+1,(i-1)m+2,\ldots,im \}$ of size $m-\lambda$ and define $S= \bigcup\limits_{i=1}^n S_i$.
\item We puncture $\widehat{H}$ on columns indexed by $S$. Let $\hhs$ be the resulting\\ $m(n-k) \times  \lambda n$ parity check matrix and let $\widehat{\cC}_S$ be the shortened code.
\end{itemize}
\paragraph{Hiding $\widehat{\cC}_S$}
\begin{itemize} 
\item Choose $n$ random $ \lambda \times  \lambda$ invertible matrices $T_1,T_2,\ldots,T_{n}$ over $\vF_{q}$. Define $T$ to be the block diagonal matrix having $T_1,T_2,\ldots, T_{n}$ as diagonal blocks.
\item Now choose a random permutation $\sigma$ of length $n$ and define $P_\sigma$ to be the block permutation matrix of size $ \lambda n \times  \lambda n$. It can also be seen as Kronecker product of the $n \times n$ permutation matrix corresponding to $\sigma$ and the identity matrix of size $ \lambda$.
\item  Define $Q := TP_\sigma$ and compute $H'= \hhs Q$.
\end{itemize}

The private key is then $(H,Q,\gamma)$ and the public key is $(H', t, \lambda)$.

\paragraph{\textbf{Encryption:}} Let $y \in \vF_{q}^{ \lambda n}$ be a message having support in $t$ sub-vectors each of length $ \lambda$, in particular
\begin{align*}
\text{support}(y) \subseteq \left \lbrace \lambda \right.& (i_1-1)+1, \lambda (i_1-1)+2,  \ldots, \lambda (i_1),  \lambda (i_2-1)+1,  \lambda (i_2-1)+2, \\ & \ldots, \lambda (i_2),\ldots,\left. \lambda (i_t-1)+1,\lambda (i_t-1)+2,\ldots, \lambda (i_t) \right \rbrace, 
\end{align*} for some distinct $i_1,i_2,\ldots,i_t \in \lbrace1,2,\ldots,n \rbrace$.  Then compute  the cipher text
\begin{equation*}
c = H' y^\intercal.
\end{equation*}

\paragraph{\textbf{Decryption:}} For the decryption we apply $\phi_{n-k}^{-1}$ on $c$, i.e.
\begin{eqnarray*}
\phi_{n-k}^{-1}(c) & = & \phi_{n-k}^{-1}\left(\hhs Qy^\intercal \right). 
\end{eqnarray*}
Observe that $\hhs Qy^\intercal  = \widehat{H} \bar{y}^\intercal$, where $\bar{y}$ is the embedding of $yQ^\intercal $ to $\vF_{q^m}$, by introducing zeros on the positions indexed by $S$. From  Proposition \ref{prop:phipsi} we get
\begin{eqnarray*}
\phi_{n-k}^{-1}\left(\widehat{H} \bar{y}^\intercal \right)  & = & H \left(\phi_n^{-1}(\bar{y})\right)^\intercal.
\end{eqnarray*}

Due to the block structure of the matrix $Q$, the vector of $Qy^\intercal$ has support in $t$ sub-vectors each of length $\lambda$, thus $ \bar{y}$ has support in $t$ sub-vectors each of length $m$.   Henceforth $\text{wt}(\phi_n^{-1}(\bar{y})) \leq  t $, and we can decode $\phi_{n-k}^{-1}(c)$ to get $\phi_n^{-1}(\bar{y})$. By applying $\phi_n$ we get $\bar{y}$ and by projecting on positions not indexed by $S$, we get $Qy^\intercal$  and therefore after multiplying by $Q^{-1}$, we recover the message $y$.

\subsection*{Choice of parameters}

For low key sizes it is desirable to use a small degree of extension $m$ and small $\lambda$. 

In the case of quadratic extension and in the case of $\lambda=1$, puncturing all but one column from each block results in an alternant code (subfield subcode of a GRS code). Alternant codes are known to be vulnerable to square code attacks \cite{co17w,fa13}. Hence, we do not propose to use quadratic extensions or $\lambda=1$. 

We therefore propose to use $m=3$ and $m=4$ with $\lambda= 2$.

\section{Security}\label{sec:Security}

In this section we  discuss the security of the proposed cryptosystem. 
We  focus on the three main attacks on cryptosystems based on GRS codes. Two  of them are structural (or key recovery) attacks, namely the Sidelnikov-Shestakov attack and the distinguisher attack based on the Schur product of the public code. The third one is the best known non-structural attack called information set decoding (ISD).

\subsection{Sidelnikov and Shestakov attack }
The first code-based cryptosystem using GRS codes as secret codes was proposed by Niederreiter in the same article \cite{ni86a} as the famous Niederreiter cryptosystem. This proposal was then attacked by Sidelnikov and Shestakov in \cite{si92a}, where they used the fact, that the public matrix is still a generator matrix of a GRS code and they were able to recover the evaluation points and hence the GRS structure of the public matrix. 

In the  cryptosystem proposed in Section \ref{sec:cryptosystem}, the secret GRS parity check matrix $H$ over $\vF_{q^m}$ is hidden in two ways: first by puncturing its expanded parity check matrix $\widehat{H}$ over $\vF_q$ and then by scrambling the columns of the punctured matrix $\hhs$. Due to multiplying $\hhs$ with a block diagonal matrix it is clear that the resulting code is no more equivalent to an evaluation code (or an expanded evaluation code).
Hence evaluations (or expanded evaluation column vectors) can not be exploited using the Sidelnikov-Shestakov attack.

\subsection{Distinguisher attack based on the Schur product}

For the attack based on the Schur product we need to introduce some definitions and notations. 

\begin{definition}[Schur product]
Let $x,y \in \vF_q^n$. We denote by the Schur product of $x$ and $y$ their component-wise product
\begin{equation*}
x \star y = (x_1  y_1, \ldots, x_n y_n).
\end{equation*}
\end{definition}

\begin{remark}
The Schur product is symmetric and bilinear. \label{remark_schur}
\end{remark}
\begin{definition}[Schur product of codes and square code]

Let $\mathcal{A},\mathcal{B}$ be two codes of length $n$. The Schur product of two codes is the vector space spanned by all $a \star b$ with $a \in \mathcal{A}$ and $b \in \mathcal{B}$:
\begin{equation*}
\langle \mathcal{A} \star \mathcal{B} \rangle  = \langle \{ a \star b \bigm| a \in \mathcal{A}, b\in \mathcal{B} \} \rangle.
\end{equation*}
If $\mathcal{A} = \mathcal{B}$, then we call $ \langle \mathcal{A} \star \mathcal{A} \rangle $ the square code of $\mathcal{A}$ and denote it by $ \langle \mathcal{A}^2 \rangle $.
\end{definition}

\begin{definition}[Schur matrix]
Let $G$ be a $k \times n$ matrix, with rows $(g_i)_{1 \leq i \leq k}$.  The Schur matrix of $G$, denoted by $S(G)$,  consists of the rows
$
g_i \star g_j $
for \\ $1 \leq i \leq j \leq k.$
\end{definition}
We observe by Remark \ref{remark_schur}, that if $G$ is a generator matrix of a code $\cC$ then its Schur matrix $S(G)$ is a generator matrix of the square code of $\cC$. Let $s$ be the following map 
\begin{eqnarray*}
s: \mathbb{N} & \to &  \mathbb{N} \\
k & \mapsto & \dfrac{1}{2}\left(k^2+k\right).
\end{eqnarray*}
For a $k \times n$ matrix $A$, we observe that $S(A)$ has the size $s(k) \times n$. \\

Various McEliece cryptosystems based on modifications of GRS codes have been proved to be insecure \cite{co14,co15a,ga12}. This is because the dimension of the square code of GRS codes is very low compared to a random linear code of the same dimension.
Moreover, other families of codes have also been shown to be vulnerable against the attacks based on Schur products. In \cite{co17}, Couvreur \textit{ et al.} presented a general attack against cryptosystems based on algebraic geometric codes and their subcodes. In \cite{fa13} Faug\`{e}re \textit{et al.} showed that high rate binary Goppa codes can be distinguished from a random code. 
In \cite{co17w}, Couvreur \textit{et al.} presented a polynomial time attack against cryptosystems based on non-binary Goppa codes defined over quadratic extensions.

The distinguisher attack is based on the low dimensional square code of the public code (or of the shortened public code). In the following, based on experimental observations, we infer that the public code of the proposed cryptosystem cannot be distinguished using square code techniques. 

Let $\widehat{\cC}_S$ be the public code of the proposed cryptosystem. Note that $\ccs$ is a shortening of an expanded GRS code $\widehat{\cC}$. 

\begin{enumerate}
\item \emph{Squares of expanded GRS codes}: Like in the case of Reed-Solomon codes and their subfield subcodes, the expanded GRS codes also have low square code dimension. To see this, we visualize expanded GRS codes as subfield subcodes of GRS-like codes. Let $\cC$ be a GRS code of length $n$ and dimension $k$ over $\vF_{q^m}$ having the following parity check matrix  \[H = V_{r}(x,y) := \begin{pmatrix}
y_1 & y_2 & \cdots & y_n \\
y_1 x_1 & y_2 x_2 & \cdots & y_n x_n \\
\vdots & \vdots & \ddots & \vdots \\
y_1 x_1^{r-1} & y_2 x_2^{r-1} & \cdots & y_n x_n^{r-1} \\
\end{pmatrix},\] where $x = (x_1,\ldots,x_n)$ is a vector of distinct elements in $\vF_{q^m}$, $y = (y_1,\ldots,y_n)$ is a vector over $\vF_{q^m}^\ast$ and $r := n-k$. Let $\gamma$ be a primitive element in $\vF_{q^m}$. We define a new code $\mathcal{B}$ of length $mn$ over $\vF_{q^m}$ given by the kernel of the following parity check matrix \[H^\prime = \begin{pmatrix}
V_{r}(x,y) & \mid & V_r(x,\gamma y) & \mid & \cdots & \mid & V_r(x,\gamma^{m-1} y)
\end{pmatrix}.\] Using Lemma \ref{lemma:expand}, it is easy to observe, that the expanded code $\widehat{\cC}$ of $\cC$ with respect to $\gamma$ is permutation equivalent to the $\vF_q$-kernel of $H^\prime$. In other words $\widehat{\cC}$ is permutation equivalent to the subfield subcode of $\mathcal{B}$ over $\vF_q$. Observe that a generator matrix $G^\prime $ of $\mathcal{B}$ is given by 
\begin{equation*}
\begin{pmatrix}
V_k(x,y^\prime) & 0 & \ldots & 0 & 0 \\
0 & V_k(x,\gamma^{-1} y^\prime) & \ldots & 0 & 0 \\
\vdots & \vdots &  \ddots & \vdots & \vdots \\
0 & 0 & \ldots & V_k(x,\gamma^{-(m-2)}y^\prime) & 0 \\
0 & 0 & \ldots & 0 & V_k(x,\gamma^{-(m-1)}y^\prime )\\ \hline
V_r(x,y^{\prime \prime}) & 0 & \ldots & 0 & - V_r(x,\gamma^{-(m-1)} y^{\prime \prime}) \\
 0 & V_r(x,\gamma^{-1} y^{\prime \prime}) &  \ldots & 0 & - V_r(x, \gamma^{1-(m-1)} y^{\prime \prime}) \\
\vdots & \vdots & \ddots & \vdots & \vdots \\
0 & 0  & \ldots & V_r(x, \gamma^{-(m-2)} y^{\prime \prime}) & - V_r(x,\gamma^{(m-2)-(m-1)} y^{\prime \prime}) \\
\end{pmatrix},
\end{equation*}
where $y^\prime$ is such that $V_k(x,y^\prime) V_r(x,y)^\intercal = 0$, and $y^{\prime \prime} = (x_1^k,x_2^k,\ldots,x_n^k)\star y^\prime$. One can verify that $G^\prime (H^\prime)^\intercal = 0$. 
Observe that a generator matrix of $\widehat{\cC}$ is permutation equivalent to \[\widehat{G} = \begin{pmatrix}
G_1 & 0 & \ldots & 0 \\
0 & G_2 & \ldots & 0 \\
\vdots & \vdots & \ddots & \vdots \\
0 & 0 & \ldots & G_m \\ \hline
\multicolumn{4}{c}{G_{gv}} \\
\end{pmatrix}, \]
where $G_i$ is a generator matrix of the subfield subcode of $V_k(x,\gamma^{-(i-1)}y^\prime)$ over $\vF_q$, and $G_{gv}$ is a generator matrix of the $\vF_q$-subfield subcode of the bottom $(m-1)r$ rows of $G^\prime$. The matrix $G_{gv}$ is also known as the glue-vector generator matrix, as in \cite{va91}. Due to the block structure of $\widehat{G}$ the Schur matrix of $\widehat{G}$ will have many zero rows. As a result the dimension of the square code is not full, given large enough $n$. This may lead to vulnerabilities when using expanded GRS codes directly in the cryptosystem. 
\item \emph{Effect of Shortening}: Consider the parity check matrix $\widehat{H}$ of an expanded GRS code as shown in Lemma \ref{lemma:expand}. We partition the columns of $\widehat{H}$ into $n$ blocks, each of size  $m$. By the definition of $\widehat{H}$, each of these blocks corresponds to a unique column vector of the parity check matrix of the parent GRS code. In order to weaken this correspondence, we puncture (randomly chosen) $m -\lambda$ of the columns from each block of $\widehat{H}$. As a result the correspondence of each block to the parent column vector is inconsistent. In addition we multiply the punctured parity check matrix by an invertible block diagonal matrix $T$. This further destroys the algebraic structure inherited from the parent GRS code. 
This was evident in our computations of the square code dimension of such shortened codes. Even in the case of $m=3$ we observed that puncturing one column from each block of $\widehat{H}$ results in a full square code dimension. 
\end{enumerate}



\subsection{Information Set Decoding}\label{sec:ISD}

Information set decoding (ISD) algorithms are the best known algorithms for decoding a general linear code. ISD algorithms were introduced by Prange \cite{pr62} in 1962. Since then several improvements have been proposed for codes over the binary field by Lee-Brickel \cite{le88}, Leon \cite{le88a}, Stern \cite{st89} and more recently by Bernstein \textit{et al.}  \cite{be11}, Becker \textit{et al.} \cite{be12}, May-Ozerov \cite{ma15}. Several of these algorithms have been generalized to the case of codes over general finite fields, see \cite{kl17,hi16,in18,ni17,pe10}.   

An ISD algorithm in its simplest form  first chooses an information set $I$, which is a size $k$ subset of $\lbrace 1,2, \ldots,n \rbrace$ such that the restriction of the parity check matrix on the columns indexed by the complement of $I$ is non-singular. Then Gaussian elimination brings the parity check matrix in a standard form and assuming that the errors are outside of the information set, these row operations on the syndrome will exploit the error vector, if the weight does not exceed the given error correction capacity. 

\paragraph{ISD for the proposed cryptosystem:}
In the proposed cryptosystem we introduce a burst pattern in the  error vector, in particular the error vector has support in $t$ sub-vectors each of length $\lambda$. Henceforth, we modify Stern's ISD algorithm to incorporate such pattern in the error vector.

We first recall the Stern’s algorithm. The algorithm partitions the information set $I$ into two equal-sized subsets $X$ and $Y$, and chooses uniformly at random a subset $Z$ of size $\ell$ outside of  $I$. Then it looks for vectors having exactly weight $p$ among the columns indexed by $X$, exactly weight $p$ among the columns indexed by $Y$, and exactly weight 0 in columns indexed by $Z$ and the missing weight $t-2p$ in the remaining indices.

In the proposed cryptosystem we have been given a public code $\ccs$ of length $\lambda n$ and dimension $k^\prime := mk-(m-\lambda)n$ over $\vF_q$. We also know that the error vector has support in $t$ sub-vectors of length $\lambda$. Hence we use Stern's algorithm on the blocks of size $\lambda$. We consider the information set $I$ to have $\left \lfloor k^\prime/\lambda \right \rfloor$ blocks. We partition $I$ into two equal-sized subsets $X$ and $Y$, and choose uniformly at random a subset $Z$ of $\ell$ blocks outside of $I$. Then we look for vectors  having support in exactly $p$ blocks in $X$, exactly $p$ blocks in $Y$, and exactly 0 blocks in $Z$.

In Section \ref{sec:KeySize} we compute the  key sizes of the proposed cryptosystem having 256-bit security against this modified ISD algorithm.

\section{Key size}\label{sec:KeySize}

In this section we compute the key sizes of the proposed cryptosystem having 256-bit security against the ISD algorithm discussed in Section \ref{sec:ISD}. Later we compare these key sizes with the key sizes of the  McEliece cryptosystem using binary Goppa codes \cite{be08} and some recently proposed cryptosystems that are using Reed-Solomon codes as secret codes. These are based on the idea of  \cite{ba11,ba15z} (BBCRS), where the authors proposed to hide the structure of the code using as transformation matrix the sum of a  rank $z$ matrix and a weight $w$ matrix. The proposed parameters in \cite{ba11,ba15z} with $z=1$ and $w\leq 1+R$ were broken by the square code attack \cite{co14,co15a}, where $R$ denotes the rate of the code. 
Two countermeasures were recently proposed in \cite{ba19,kh18}.
In order to hide the structure of the Reed-Solomon code  the authors of \cite{ba19} use $w>1+R$ and $z=1$ or $w<1+R$ and $z>1$. Whereas in \cite{kh18} the transformation matrix has weight $w=2$ and rank $z=0$.

In the proposed cryptosystem, the public key is a parity check matrix of a linear code over $\vF_q$ having length $\lambda n$ and dimension $mk-(m-\lambda) n$. Hence the public key size is $(\lambda n-m(n-k)) \cdot m(n-k) \cdot \log_2(q)$ bits. For a degree of extension $m$, let $\cC_m$ be the public code. 

In Table \ref{table:m2KS}, we provide the key sizes for different rates of the public code $\cC_3$ achieving a 256-bit security level against the modified ISD algorithm discussed in Section \ref{sec:ISD}. Observe that the smallest key size is achieved at rate $0.82$. 
\begin{table}[h!]
\begin{center}
\begin{tabular}{c|c|c|c|c|c}
	Rate & $q$ & $n$ & $k$ & $t$ &  Key Size (bits) \\
	\hline
	0.60 & 13 & 1382 & 829 & 277 & 6783627 \\
	0.65 & 13 & 1270  & 825 & 223 & 5952804 \\
	0.70 & 13 & 1207 & 844 & 182 & 5339456 \\
	0.75 & 13 & 1192 & 894 & 149 & 4929077 \\
	0.80 & 13 & 1230 & 984 & 123 & 4702652 \\
	0.82 & 13 & 1258 & 1031 & 114 & 4624198 \\
	0.85 & 13 & 1340 & 1139 & 101 & 4634545 \\ 
	0.87 & 13 & 1420 & 1235 & 93 & 4692805 \\ 
	0.90 & 13 & 1602 & 1441 & 81 & 4863276 \\ \hline
\end{tabular}
\caption{Comparing key sizes of the proposed cryptosystem with $m=3$ and $\lambda=2$ reaching a $256$-bit security level against the modified ISD algorithm.} \label{table:m2KS}
\end{center}
\end{table}

In Table \ref{table:m4KS}, we provide the key sizes for different rates of the public code $\cC_4$ achieving a 256-bit security level against the modified ISD algorithm discussed in Section \ref{sec:ISD}. In this case the smallest key size is achieved at rate $0.89$. 

\begin{table}[h!]
\begin{center}
\begin{tabular}{c|c|c|c|c|c}
	Rate & $q$ & $n$ & $k$ & $t$ & Key Size (bits) \\
	\hline
	0.65 & 7 & 2360 & 1534 & 413 & 13134108 \\
	0.70 & 7 & 1945 & 1361 & 292 & 10191102 \\
	0.75 & 7 & 1738 & 1303 & 218 & 8480009 \\
	0.80 & 7 & 1662 & 1329 & 167 & 7448878 \\
	0.85 & 7 & 1700 & 1445 & 128 & 6815134 \\
	0.87 & 7 & 1770 & 1539 & 116 & 6785893 \\
	0.89 & 7 & 1872 & 1666 & 103 & 6754721 \\
	0.91 & 7 & 2024 & 1841 & 92 & 6814326 \\ \hline
\end{tabular}
\caption{Comparing key sizes of the proposed cryptosystem with $m=4$ and $\lambda=2$ reaching a $256$-bit security level against the modified ISD algorithm.} \label{table:m4KS}
\end{center}
\end{table}

In conclusion, for a $256$ bit security level we propose to use the cryptosystem with the two sets of parameters $(q=13, m=3, \lambda=2, n=1258 ,k= 1031)$ and $(q=7, m=4, \lambda=2, n=1872, k=1666)$, see Table \ref{Table:proposed}.
\begin{table}[h!]
\begin{center}
\begin{tabular}{ll|c|c|c|c |c}
	  &  & $q$ & $m$ & $n$ & $k$ &  Key Size (in bits) \\
	\hline
	\multirow{2}{1.8cm}{Proposed system } & Type I  &  13 & 3 & 1258 & 1031 & 4624198 \\
	& Type II  & 7  & 4 & 1872 & 1666  & 6754721 \\ \hline
	\multicolumn{2}{l|}{classical McEliece} & 2 & 13 & 6960 & 5413 & 8373911 \\ \hline
	\multirow{3}{1.8cm}{BBCRS based schemes} & $w=1.708$ and $z=1$ & 1423 & 1 & 1422 & 786 & 5113520 \\
    & $w=1.2$ and  $z=10$ & 1163 & 1 & 1162 & 928 & 2274160 \\
    & $w=2$ and $z=0$ & 1993 & 1 &1992 & 1593 & 6966714  \\  \hline
\end{tabular} 
\caption{Comparing the key sizes of the proposed parameters against different cryptosystems.} \label{Table:proposed}
\end{center}
\end{table}

 
The proposed parameters for the classic McEliece system using binary Goppa codes by Bernstein \textit{et al.} in \cite{be08} are $q=2, m=13, n= 6960, k = 5413$, which gives a key size of $8373911$ bits. It achieves a security level of 260-bits with respect to the ball-collision algorithm \cite{be11}. 

In comparison to the classic McEliece system, the Type I set of parameters reduces the key size  by $44.8 \%$ and the Type II set of parameters reduces the key size  by $ 19.3\%$.


\section{Acknowledgement}
The authors would like to thank Matthieu Lequesne and Jean-Pierre Tillich for  pointing out the square code vulnerability in the case of quadratic extensions. This work has been supported by the Swiss National Science Foundation under grant no. 169510.

\end{document}